\newenvironment{varalgorithm}[1]
  {\algorithm}
  {\endalgorithm}
\newtheorem{theorem}{Theorem}[section]
\newtheorem{lemma}[theorem]{Lemma}
\newcommand{\thmtext}{ There exists a sequence of matrices $M_n\in\{0,1\}^{m\times n}$ 
that satisfies:  
    \begin{enumerate}
        \item There exists a valid set of choices for \ref{alg:ess-finder} such that $\ref{alg:ess-finder}(M_n)$ outputs $A_n$.
        \item
        $dim(Ker(M_n)) \leq n/2+2$.
        \item $dim(Ker(A_n))\geq n/2+\Omega(n)$.
    \end{enumerate}}
\newtheorem*{T1}{Theorem~\ref{thm: matrix properties}}
\newtheorem{claim}[theorem]{Claim}
\newtheorem{corollary}[theorem]{Corollary}
\theoremstyle{definition}
\newtheorem{definition}{Definition}[section]
\newcommand{\set}[1]{\{ {#1} \}}
\newcommand\sett[2]{\left\{ #1 \left| \; \vphantom{#1 #2} \right. #2  \right\}}
\newcommand{\one}{\mathbbm{1}}
\newcommand{\bits}{\{0,1\}}
\newcommand{\remove}[1]{}
\newcommand{\calC}{\mathcal{C}}
\title{The linear time encoding scheme fails to encode}
\author{Yotam Dikstein\thanks{Institute for Advanced Study, USA. email: yotam.dikstein@gmail.com.}, \; Irit Dinur\thanks{Weizmann Institute of Science, ISRAEL. email: irit.dinur@weizmann.ac.il. Supported by ERC grant 772839, and ISF grant 2073/21.}
\;and Shiri Sivan\thanks{Weizmann Institute of Science, ISRAEL. email: shirisiv@gmail.com. }}
\begin{document}
\maketitle

\begin{abstract}
We point out an error in the paper ``Linear Time Encoding of LDPC Codes'' (by Jin Lu and José M. F. Moura, IEEE Trans). The paper claims to present a linear time encoding algorithm for every LDPC code. We present a family of counterexamples, and point out where the analysis fails. The algorithm in the aforementioned paper fails to encode our counterexample, let alone in linear time.
\end{abstract}

\section{Introduction}
A Low Density Parity Check (LDPC) code is defined as the null-space of a low density $m{\times}n$ ($m<n$) matrix over $\mathbb{F}_2$. In this context, a matrix is called low density if each row has $O(1)$ ones. Gallager was the first to study random ensembles of such codes \cite{Gal} and proved that they can be decoded in linear time by a simple message-passing algorithm. Sipser and Spielman \cite{SipSp} showed that this works whenever the parity-check graph is a good enough expander. 

Although decoding is optimal (linear time), the straightforward encoding procedure is quadratic, because the generator matrix of these codes is dense. Are there more efficient algorithms? There are several specific families of codes for which the  encoding complexity has been analyzed. For example:
\begin{itemize}
    \item Spielman \cite{Spielman} constructs a family of linear time encodable and decodable codes. 
    \item Richardson et al. \cite{RU} present a number of encoding schemes (distinguished by their preprocessing algorithms) and analyze their expected performance on various LDPC distributions. They find that certain distributions can be encoded in expected linear time using these algorithms. However, Di et al. \cite{DRU} show that these distributions have expected sub-linear distance.
\end{itemize}
    
    As previously mentioned, Richardson et al. \cite{RU} introduce multiple preprocessing  algorithms. These  algorithms, notably Algorithms C and D, exhibit similarities to those proposed by Lu et al. \cite{LuMoura2008}. Both papers aim to triangularize the input LDPC matrix, or bring it as close as possible to triangular form, through greedy row and column permutations. The primary distinction between the algorithms suggested by \cite{LuMoura2008} and \cite{RU} lies in the fact that the former allows a restricted number of row additions along with row and column permutations, while the latter does not. \cite{RU} calculate an expected \textbf{quadratic} encoding complexity for  certain LDPC distributions, whereas  \cite{LuMoura2008} claim that their algorithm assures linear time encoding for all LDPC codes. 

Excluding Lu and Moura \cite{LuMoura2008}, there have been no claims of a sub-quadratic encoding algorithm for general LDPC codes. 
In this note we address the algorithm presented in \cite{LuMoura2008} and show that it contains a critical flaw. Specifically, their algorithm is indeed linear-time, but fails to encode the code given by the input LDPC matrix.

\begin{theorem}
    There exists a family of LDPC codes given by  matrices $M_n$ on which the algorithm  presented in \cite{LuMoura2008} fails to encode the code.
\end{theorem}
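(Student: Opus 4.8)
The plan is to derive the theorem from the quantitative statement recorded in Theorem~\ref{thm: matrix properties}, which carries essentially all of the work. The reduction rests on one structural fact about the scheme of \cite{LuMoura2008}: once its preprocessing stage has turned $M_n$ into a matrix $A_n$, the encoder works only with $A_n$ --- it reads the message into a fixed set of ``free'' coordinates and back-substitutes along the triangular part of $A_n$ to fill in the rest --- so the map it computes has domain $\mathbb{F}_2^{\dim\ker(A_n)}$ (the number of free coordinates being $n-\operatorname{rank}(A_n)$), and the set of vectors it can output is contained in the code $\ker(A_n)$, read through the column permutation the algorithm applied. Every operation a legitimate preprocessing is allowed to use --- row additions and row/column permutations --- preserves $\dim\ker$, so a correct implementation would always have $\dim\ker(A_n)=\dim\ker(M_n)$. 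The counterexample is built precisely so that the algorithm, on $M_n$, performs an operation outside this repertoire: it discards (or zeroes out) rows that it mistakenly certifies as redundant, and Theorem~\ref{thm: matrix properties} measures the resulting loss.

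Granting that theorem: property~(1) supplies an admissible run of \ref{alg:ess-finder} that outputs this particular $A_n$, and properties~(2) and~(3) give $\dim\ker(A_n)\ge\dim\ker(M_n)+\Omega(n)$. The encoder therefore defines a map from $\mathbb{F}_2^{\dim\ker(A_n)}$ into $\mathbb{F}_2^{n}$, and for it to encode $\ker(M_n)$ it would have to be injective with image inside $\ker(M_n)$ --- impossible, since $2^{\dim\ker(A_n)}>|\ker(M_n)|$. So the algorithm either maps two distinct messages to the same output or produces a vector that is not a codeword of $M_n$; either way it does not encode the given code. In fact, since every row of $A_n$ still lies in the row span of $M_n$, the code it actually realizes is $\ker(A_n)\supsetneq\ker(M_n)$, an overcode of index $2^{\Omega(n)}$: the algorithm has irrecoverably lost $\Omega(n)$ in rank. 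This also locates the flaw in the original analysis --- the step at which the eliminated rows are taken to be redundant, i.e.\ that rank is implicitly assumed to be preserved.

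The substance is the construction of $M_n$ and the verification of (1)--(3). I would assemble $M_n$ from $\Theta(n)$ disjoint copies of a small gadget, attached to a high-rank ``core,'' engineered so that (i) there is a legal schedule of permutations and budgeted row additions along which the algorithm, in processing each gadget, commits the erroneous elimination exactly once --- contributing one unit of spurious nullity per gadget and thus $\Omega(n)$ in total, which gives (3); while (ii) $M_n$ itself stays within $2$ of full column rank, which gives (2). Property~(1) is then proved by explicit simulation: fix the schedule, follow the algorithm's internal state step by step, and check that no step forces it off the intended (bad) branch.

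I expect property~(1) --- the faithful simulation of \ref{alg:ess-finder} --- to be the main obstacle, since it forces one to pin down the precise specification of the algorithm (its tie-breaking rules, the exact row-addition budget, the order in which it scans rows and columns) and then to argue that the erroneous branch is genuinely reachable, not merely plausible; designing a gadget whose behavior under the algorithm is simultaneously predictable and wrong is the delicate part. The two rank estimates (2) and (3) should, by contrast, reduce to elementary linear algebra over $\mathbb{F}_2$ on the block structure --- but the bookkeeping across the $\Theta(n)$ repeated gadgets must be kept scrupulous, so that the intended rank loss is not silently cancelled by interactions among gadgets or between the gadgets and the core.
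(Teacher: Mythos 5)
Your reduction to Theorem~\ref{thm: matrix properties} is the paper's route as well (it is formalised in Corollary~\ref{cor:decomp-fails}), and your final counting step---the domain of the encoding map is too large to inject into $\ker(M_n)$---is essentially the inequality chain the paper uses. But two parts of your framing are off, and they matter because they mislocate where the error lives.

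First, the ``structural fact'' you base the reduction on is not what Lu and Moura's scheme does. The preprocessing does not \emph{turn} $M_n$ into $A_n$ by rank-preserving row and column operations; \ref{alg:decomposition} cuts $M_n$ into \emph{sub}matrices, of which $A_n$ is the first (it is $M_n$ restricted to a set of rows and columns found by \ref{alg:ess-finder}), and the encoder runs a circuit for each sub-matrix and wires them together. In particular the invariant a correct scheme must maintain is not $\dim\ker(A_n)=\dim\ker(M_n)$ but $\sum_i k_i = \dim\ker(M_n)$, where $k_i$ is the number of new input bits of the $i$-th component (this is exactly equation~\eqref{eq:falseclaim} in the paper, and Corollary~\ref{cor:decomp-fails} contradicts it). Your conclusion is still reachable because $\sum_i k_i \ge k_1$ and $k_1$ is within $1$ of $\dim\ker(A_n)$ (the PESS-handling step may add one constraint, giving $A'_n$), but you should argue via that sum, not via a false rank-preservation claim; as written, the reasoning ``legitimate preprocessing preserves $\dim\ker$'' would not survive scrutiny because taking a submatrix is exactly what is happening and it does \emph{not} preserve nullity.

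Second, the construction you sketch---$\Theta(n)$ disjoint gadgets, each contributing one unit of spurious nullity---is not how the paper obtains $\Omega(n)$ excess. In the paper's $M_n$, the $\Omega(n)$ deficit is created in one shot: \ref{alg:ess-finder} is steered into emitting the single large ESS $A_n$, a full-rank matrix with $10N+6$ columns and only $4N+2$ rows, so $\dim\ker(A_n)=6N+4 \approx n\cdot 6/11$. The remaining rows ($B_n$) do touch the columns of $A_n$ and would have cut this nullity roughly in half, but once $A_n$ is carved out they are never applied to those columns; the small $3\times 2$ gadget copies in the bottom-right corner exist only to trigger the PESS branch and to keep the matrix $(3,6)$-regular, not to accumulate per-gadget loss. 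Both your gadget-replication idea and the paper's one-big-ESS idea could in principle work, but be aware that the delicate part you correctly flag---forcing a valid schedule of \ref{alg:ess-finder} that actually reaches the bad branch---is resolved in the paper by Lemma~\ref{lem:valid-choices} via a careful row-by-row weight comparison in $S_n$, $D_n$, $B_n$, and it is not obvious that a collection of independent gadgets would admit an equally controllable schedule (the algorithm always picks a globally lightest row, so gadgets interact through the weight ordering).
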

In particular, the algorithm of \cite{LuMoura2008} fails on the $9\times 18$ matrix $M_{18}$ depicted in the following figure, as we illustrate in Section \ref{sec:running-the-alg}.
\begin{figure}[htp]
    \centering
    \includegraphics[scale=0.3]{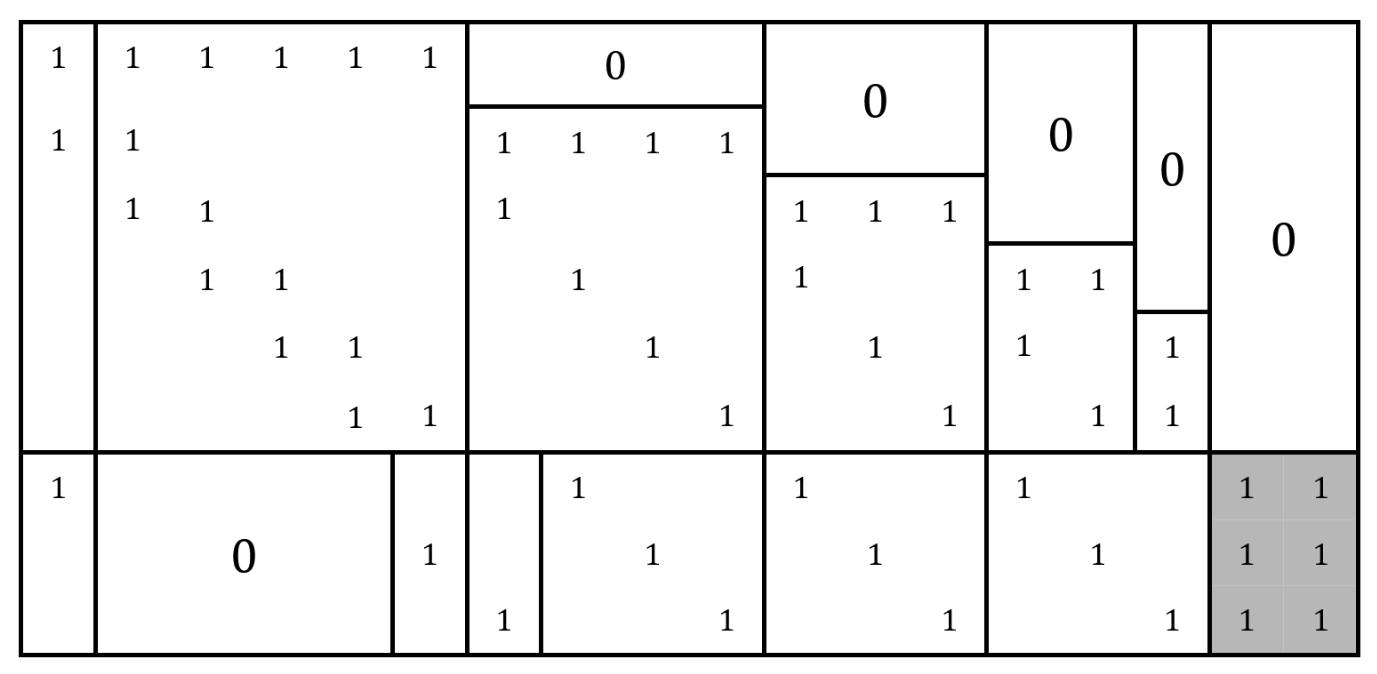}
    \caption{$M_{18}$. All blank spaces are zeros.}
    \label{fig:M18}
\end{figure}

\section{Definitions}
\begin{definition}[Kronecker product]\label{def:otimes}
    If $A$ is an $m\times n$ matrix and $B$ is a $p \times q$ matrix, then the Kronecker product $A \otimes B$ is the $pm \times qn$ block matrix:
    \begin{figure}[htp]
    \centering
    \includegraphics[width=5cm, 
    ]{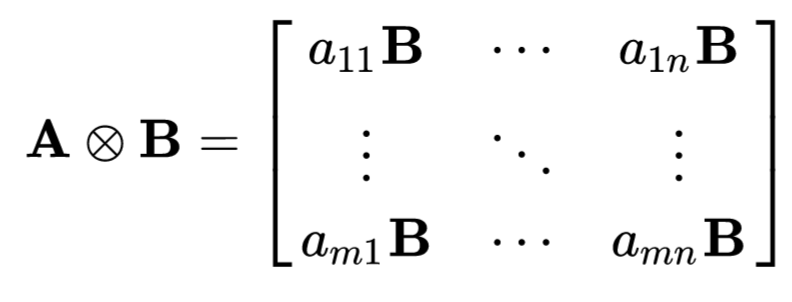}
\end{figure}
\end{definition}

A linear code $\calC$ specified by a {\em parity check} matrix $M\in \{0,1\}^{m\times n}$, is the linear subspace 
\[ \calC = \sett{x\in \bits^n}{Mx=0}.\]
This space is also referred to as the kernel of $M$, denoted as $Ker(M)$.
We like to view $M$ also as the adjacency matrix of a bipartite graph 
with left nodes $[n]$ called variables and right nodes $[m]$ called constraints such that $M(i,j)=1$ iff the $i$-th variable is connected to the $j$th constraint.

Lu and Moura mainly use the graph representation, while we prefer the matrix view. In Section \ref{sec: algs} we will present the algorithms from \cite{LuMoura2008} in both languages. We will freely alternate between the terms variable and column, constraint and row.

Given a matrix $M\in\bits^{m\times n}$, a set of row indices $C\subset[m]$ and column indices $V\subset[n]$ we use the notation $M(C,V)$ to denote the sub-matrix of $M$ induced by these sets. This corresponds to the subgraph of the Tanner graph induced by vertices $V\subset [n]$ and $C\subset [m]$.

We will use the following definition of an (algebraic) circuit over \(\mathbb{F}_2\). 

\begin{definition}[Circuit]
    A circuit \(\Phi = (V,E)\) is a directed acyclic graph such that every vertex has in-degree at most \(2\) (i.e. fan-in \(2\)). The input of the circuit, \(I(\Phi) \subseteq V\) are all vertices with in-degree \(0\). The output of the circuit \(O(\Phi)\) are all vertices with out-degree \(0\). The size of the circuit \(|\Phi| = |V|\).
\end{definition}
Vertices of a circuit are sometimes called gates. We note that while formally, the fan-in in this model is \(2\). The results in this paper remain the same by relaxing the fan-in bound to any constant \(d\) instead.

Let \(\Phi\) be circuit with input vertices \(I=\set{v_1,v_2,\dots,v_k}\) and output vertices \(O = \set{u_1,u_2,\dots,u_m}\). A circuit naturally calculates a linear function \(T:\mathbb{F}_2^k \to \mathbb{F}_2^m\) as follows. Given \((x_1,x_2,\dots,x_n) \in \mathbb{F}_2^n\) every input vertex \(v_i\) is labeled with \(x_i\). Then the label of every other vertex is set to be the sum of the labels of its incoming neighbors (mod \(2\)). The fact that \(\Phi\) is acyclic ensures that such a labeling is possible. Finally, the value of \(T(x_1,x_2,\dots,x_k) = (\ell(u_1),\ell(u_2),\dots,\ell(u_m))\) where \(\ell(u_j)\) is the labeling of \(u_j\).

\begin{definition}[Linear time encodable codes]
    An infinite family of matrices \(\set{M_n \in \bits^{m_n\times n}}_n\) is \emph{linear time encodable} if there exists a constant \(c > 0\) and circuits \(\set{\Phi_n}_n\) of size at most \(c n\) such that $\Phi_n$ calculates a linear isomorphism \(T_n:\mathbb{F}_2^{k_n} \to Ker(M_n)\).
\end{definition}

Lu and Moura \cite{LuMoura2008} first present a construction of linear sized circuits and characterize the codes that are encodable by this construction. These codes are those whose code graph does not contain certain subgraphs called Encoding Stopping Sets (ESS) or Pseudo Encoding Stopping Sets (PESS). Graphs without such subgraphs are called Pseudo-Trees. We beleive this part of their paper is correct, see \cite[Corollary 1]{LuMoura2008} which refers to connected graphs but this easily generalizes to unions of such.

Then \cite{LuMoura2008} introduces an algorithm that takes as input an LDPC matrix \(M\) and outputs a linear sized circuit \(\Phi\) that is supposed encode \(M\). They do so by decomposing the matrix \(M\) into submatrices that are encodable via their initial construction (or a slight modification of it). We will show that this decomposition fails. Doing so requires a few more definitions.
 
 \begin{definition}[(Pseudo) Encoding Stopping Set]
Let \(V \subseteq [n]\) and \(C \subseteq [m]\) be subsets of the columns and rows of a matrix $M$. An Encoding Stopping Set (ESS) is a submatrix $M(C,V)$ such that:
\begin{enumerate}
    \item For all $c \in C$,  $\sett{v}{M(c,v) = 1} \subseteq V$. That is, all variables participating in this constraint are in \(V\).
    \item For all $v\in V$ the Hamming weight of the \(v\)-th column is at least \(2\) in \(M(C,V)\). That is, every variable in \(v\in V\) participates in at least two constraints in \(C\).
    \item The set of rows corresponding to $C$ is linearly independent (in other words, $M(C,[n])$ has rank $|C|$ over $\mathbb{F}_2$).
    
\end{enumerate}
If items \(1,2\) hold but item 3 does not, we call the submatrix a Pseudo Encoding Stopping Set (or PESS).
\end{definition}

\begin{definition}[Pseudo-Tree]
\label{def: pseudo-tree}
    A matrix is a Pseudo-Tree if it does not contain an ESS or a PESS.
\end{definition}
Lu and Moura also require that the Pseudo-Tree is connected (as a graph), but this requirement is not necessary for our purposes. Additionally, they define Pseudo-Trees in different terms, but they show the equivalence to our definition (see \cite[Corollary 1]{LuMoura2008}).

As mentioned, Pseudo-Trees are linear-time encodable via a greedy algorithm, see \cite[Lemma 1]{LuMoura2008}. Lu and Moura observe that if an ESS is ``almost" a Pseudo-Tree, then it is linear-time encodable. By ``almost" we mean that there is a constant number of constraints whose removal yields a Pseudo-Tree. Thus, they also give the following definition.
\begin{definition}[k-fold-constraint (Pseudo) Encoding Stopping Set]
Let $M(C,V)$ be an (P)ESS. We say that $M(C,V)$ is a k-fold-constraint Encoding Stopping Set if the following two conditions hold. 
\begin{enumerate}
    \item There exists k constraints $c_1,...,c_k \in C$ s.t. $M(C\setminus\set{c_1,...,c_{k}}, V)$ does not contain any PESS or ESS.
    \item For any $k-1$ constraints $c_1,...,c_{k-1}$, $M(C\setminus\set{c_1,...,c_{k-1}}, V)$ contains a PESS or ESS.
\end{enumerate}
\end{definition}

Lu and Moura provide a linear time encoding algorithm for any \(1\)- or \(2\)-fold constraint (P)ESS \cite[Algorithm 4]{LuMoura2008}.

\subsection{Lu and Moura's algorithms}
\label{sec: algs}
We now present the two main algorithms used in Lu and Moura's paper. We will present a graph version as well as a matrix version for these algorithms.
The Algorithm \ref{alg: ess finder- graph} finds a PESS or a 1 or 2-fold-constraint ESS in a given bipartite graph (see \cite[Algorithm 5]{LuMoura2008} for Lu and Moura's version). The second algorithm, \ref{alg:decomposition-graphs} utilizes the algorithm \ref{alg: ess finder- graph} to decompose the given graph into linear-time encodable components (see \cite[Algorithm 6]{LuMoura2008}).

We give a more streamlined description of their algorithms. In particular, we added the algorithm \ref{alg: strip} as a sub-procedure of \ref{alg:ess-finder} for easier reference, and we omit the analysis from the description of the algorithms. 

\begin{varalgorithm}{(P)ESS-FINDER}
\caption{\cite[Algorithm 5]{LuMoura2008}, 
Find a 1 or 2-fold-constraint (P)ESS- graph language}
\label{alg: ess finder- graph}
Input: A Tanner graph \(G=(V\cup C,E)\) with maximal variable degree 3.
\begin{enumerate}
    \item Initialize \(H=(V_H\cup C_H,E_H) \leftarrow G\), \(S \leftarrow \emptyset\).
    \item While \(H \ne \emptyset\):
    \begin{enumerate}
        \item Choose a lightest constraint \(c \in C_H\). Add \(c\) and its neighbours to \(S\) and remove them from \(H\).
         \item If \(c\) doesn't have any neighbours in \(H\) and \ref{alg: strip}\((S) \ne \emptyset\) then output \ref{alg: strip}\((S)\).

    \end{enumerate}
    \item Return \(S\). 
\end{enumerate}
\end{varalgorithm}

\begin{varalgorithm}{STRIP}
\caption{Remove degree one variables}
Input: Graph \(S\).
\label{alg: strip}
\begin{enumerate}
    \item While there exists a degree \(1\) bit node \(x \in V_S\):
    \begin{enumerate}
        \item Remove \(x\) and its neighbours from \(S\).
    \end{enumerate}
    \item Return \(S\).
\end{enumerate}
\end{varalgorithm}

\begin{varalgorithm}{(P)ESS-FINDER}
\caption{\cite[Algorithm 5]{LuMoura2008}, 
Find a 1 or 2-fold-constraint  (P)ESS- matrix language}
\label{alg:ess-finder}
\textbf{Input:}
An LDPC matrix $M \in \set{0,1}^{m\times n}$ with maximal column weight 3.
\begin{enumerate}
    \item Initialize $H\leftarrow{M}, C\leftarrow\emptyset, V\leftarrow\emptyset$.
    \item While $C \ne [m]$ and $V \ne [n]$:
    \begin{enumerate}
        \item Choose a lightest row $c$. Let $V_c$ be the indices of the corresponding variables. Add $c$ to $C$ and $V_c$ to $V$ and zero the columns $V_c$ in $H$. 
        \item \label{step: no new vars}
        If $V_c=\emptyset$, and $STRIP(M(C,V))\neq\emptyset$ then output $STRIP(M(C,V))$.
    \end{enumerate}   
    \item Return $M(C,V)$.
\end{enumerate}
\end{varalgorithm}

The $STRIP$ procedure used in the matricial version of \ref{alg:ess-finder} is equivalent to the one defined for graphs, namely we remove from $V$ any column of weight one (and the corresponding row from $C$) and repeat.

\begin{varalgorithm}{DECOMPOSE}\caption{\cite[Algorithm 6]{LuMoura2008}, Decompose a PC to ESS's, PESS's and encodable components} \label{alg:decomposition}
Input: An LDPC matrix $M(C, V)$.
\begin{enumerate}
    \item Initialize $i\leftarrow 1$, $Components\leftarrow \emptyset$, $M(C_1, V_1)\leftarrow \;\ref{alg: ess finder- graph}(M(C,V))$.
    \item While $M(C_i, V_i)\neq M(C,V)$:
    \begin{enumerate}
        \item If \(M(C_i, V_i)\) is a PESS: 
        \begin{enumerate}
            \item Find \(C' \subseteq C_i\), such that \(\sum_{c \in C'} c|_{V_i} = 0 \text{ (modulo 2)}\).
            \item\label{step:removal} Choose a constraint $c\in C'$ and remove it from $C_i$.
            \item If $i>1$:
            \begin{enumerate}
                \item Remove $M(C_{i-1}, V_{i-1})$ from $Components$.
                \item \label{step: add constraint}Add the constraint $c^*=\sum_{c\in C'}c\big|_{V_{i-1}}$ to  $C_{i-1}$.
                \item Add $ \ref{alg:decomposition}(M(C_{i-1}, V_{i-1}))$ to $Components$.
            \end{enumerate}
        \end{enumerate}
    \item Add $M(C_i, V_i)$ to $Components$.
    \item $C \leftarrow C\setminus C_i$.
    \item $V \leftarrow V\setminus V_i$.
    \item $i\leftarrow i+1$.
    \item $M(C_i,V_i)\leftarrow\ref{alg:ess-finder}(M(C,V))$.  
    \end{enumerate}
\item Add $M(C,V)$ to $Components$.
\item Output $Components$.
\end{enumerate}
\end{varalgorithm}

For the graph version of Algorithm \ref{alg:decomposition-graphs} we use the notation $G(\Tilde{U})$, to denote the subgraph of $G=(U,E)$ induced by $\Tilde{U}\subseteq U$.
\begin{varalgorithm}{DECOMPOSE}
\caption{\cite[Algorithm 6]{LuMoura2008}, Decompose a graph to ESS's, PESS's and encodable components}
\label{alg:decomposition-graphs}

Input: A Tanner graph $G = (V \cup C,E)$.
\begin{enumerate}
    \item Initialize $i\leftarrow 1$, $Components\leftarrow \emptyset$, $G_1 = G(V_1 \cup C_1) \leftarrow Algorithm \;\ref{alg: ess finder- graph}(G)$.
    \item While $G_i \neq G(V\cup C)$:
    \begin{enumerate}
        \item If \(G_i\) is a PESS: 
        \begin{enumerate}
            \item Find \(C' \subseteq C_i\), such that \(\sum_{c \in C'} c|_{V_i} = 0 \text{ (modulo 2)}\).
            \item Choose a constraint variable $c\in C'$ and remove it from $C_i$.
            \item If $i>1$:
            \begin{enumerate}
                \item Remove $G_{i-1}$ from $Components$.
                \item Add the vertex \(c^* = \sum_{c\in C'}c\big|_{V_{i-1}}\) to \(C_{i-1}\).
                \item Add $ \ref{alg:decomposition-graphs}(G_{i-1})$ to $Components$.
            \end{enumerate}
        \end{enumerate}
    \item Add $G_i$ to $Components$.
    \item $C \leftarrow C\setminus C_i$.
    \item $V \leftarrow V\setminus V_i$.
    \item $i\leftarrow i+1$.
    \item $G_i\leftarrow\ref{alg: ess finder- graph}(G(V\cup C))$ 
    \end{enumerate}
\item Add $G$ to $Components$.
\item Output $Components$.
\end{enumerate}
\end{varalgorithm}

\newpage\section{The flaw in the paper}
Before pointing to the error in \cite{LuMoura2008} we provide an outline of their intended encoding strategy. 
\begin{itemize}
    \item Input:
An LDPC matrix $M$ with kernel dimension $k$. It is assumed that each column in $M$ has weight at most $3$, by adding variables, if needed.
\item {Output:}
A linear-sized circuit $\Phi$ that implements $w\mapsto Gw$ for all $w\in \bits^k$, for some matrix $G$ such that $Im G = Ker M$. 
\end{itemize}

The paper's approach for constructing $\Phi$ from $M$ is based on a decomposition algorithm, \ref{alg:decomposition} (see \cite[Algorithm 6]{LuMoura2008}) which generates a list of ``components'' which are pseudo-trees and 1/2-fold-constraint Encoding Stopping Sets using \ref{alg:decomposition}. 

As mentioned earlier, \cite{LuMoura2008} observe that each component, being a pseudo-tree or a  1/2-fold-constraint ESS, admits linear time encoding via the label-and-decide or label-and-decide-recompute algorithm in \cite[Algorithms 2,3,4]{LuMoura2008}. This algorithm shows how to partition the bits into message bits and output bits so that one can propagate the values from message bits to output bits, using the constraints, in linear time. 

The decompose algorithm outputs components together with an implicit labeling of their input and output bits. Every component corresponds to a matrix in the output of Algorithm \ref{alg:decomposition}. However, these can be described as a collection of circuits and connections between them, as portrayed in Figure \ref{figure:phi}. More precisely, the components can be described as a collection of circuits \(\Phi_1,\Phi_2,\dots,\Phi_i\) that can be composed into a circuit \(\Phi\) that encodes the code, such that in this decomposition some of the input bits of every \(\Phi_i\) are connected to some of the output bits of \(\set{\Phi_j : 1 \leq j < i}\). Every circuit \(\Phi_i\) is supposed to encode a pseudo-tree or a \(1\)- or \(2\)-fold (P)ESS, thus the matrices that are outputted in Algorithm \ref{alg:decomposition} should be pseudo-trees or a \(1\)- or \(2\)-fold (P)ESSs.

\begin{figure}[htp]
    \centering
    \includegraphics[scale=0.42]{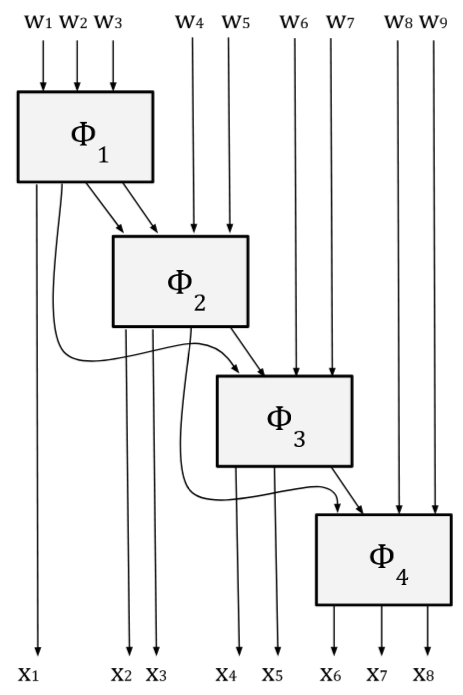}
    \caption{The circuit $\Phi$.}
    \label{figure:phi}
\end{figure}

\subsection*{The Flaw} 
Unfortunately, it is not true that \ref{alg:decomposition} returns \(1\)- or \(2\)-fold-constraint (P)ESS's and pseudo-Trees. In Step \ref{step:removal}, parallel to the third line of the decomposition algorithm in \cite[Algorithm 6]{LuMoura2008}, the authors claim that if \ref{alg:ess-finder} outputs a PESS, then removing one constraint transforms it to a Pseudo-Tree. In other words, \cite{LuMoura2008} falsely assume that the only linear dependency is the sum of all constraints and thus the removal of any single constraint resolves the linear dependency. However, there could be multiple linear dependencies on this same set of variables. The algorithm fails to take these into consideration, thereby resulting in a code that has too many codewords.

The main issue is that the constraints that we failed to add at step \ref{step: add constraint} are never taken into consideration, thereby resulting in a code that has too many codewords.

In the next section we provide an example where  \ref{alg:decomposition} encounters a PESS which has a linear number of constraints that are ignored. 

The first component output by the algorithm thus has a Kernel that is much larger and contains many non-codewords. 

\section{A Counter Example}

\begin{theorem}\label{thm: matrix properties}
There exists a sequence of matrices $M_n\in\{0,1\}^{m\times n}$ with a sequence of sub matrices $A_n$ that satisfy:  
    \begin{enumerate}
        \item \label{subthm: valid choice} There exists a valid set of choices for \ref{alg:ess-finder} such that $\ref{alg:ess-finder}(M_n)$ outputs $A_n$. 
        \item \label{subthm: M_n Nullity}
        $dim(Ker(M_n)) \leq n/2+2$.
        \item \label{subthm: A_n Nullity} $dim(Ker(A_n))\geq n/2+\Omega(n)$.
    \end{enumerate}
\end{theorem}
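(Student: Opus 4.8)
The plan is to define the family $\{M_n\}$ explicitly --- this is done in Section~\ref{sec:running-the-alg}, where $M_{18}$ of Figure~\ref{fig:M18} is the first member --- and then to verify the three items by a direct analysis. The construction is guided by the flaw described above: we want a valid run of Algorithm~\ref{alg:ess-finder} on $M_n$ to output a submatrix $A_n$ that is a PESS whose rows carry $\Omega(n)$ \emph{independent} linear dependencies, rather than only the single ``sum of all constraints'' relation implicitly assumed in \cite{LuMoura2008}; this forces $dim(Ker(A_n))$ above the trivial deficiency by a linear amount. At the same time $M_n$ contains, besides $A_n$, a family of low-column-weight ``binding'' rows, each of which destroys one of these dependencies, so that $\mathrm{rank}(M_n)$ is pushed back up to at least $n/2-2$. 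The requirement that every column of $M_n$ have weight at most $3$ (so that Algorithm~\ref{alg:ess-finder} applies) is exactly what forces the binding rows to be sparse, and it is the interplay between ``sparse binding rows'' and ``the greedy algorithm always picks the lightest row'' that must be controlled: we arrange the binding rows together with the available tie-breaks so that the run accumulates precisely the rows and columns of $A_n$ before it produces any output. Definition~\ref{def:otimes} is a convenient way to replicate the gadget realizing the many dependencies while keeping $m=O(n)$.

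For Item~\ref{subthm: valid choice} I would specify, iteration by iteration, a legal execution of Algorithm~\ref{alg:ess-finder} on $M_n$: at each step I name the lightest row that is chosen (breaking ties in our favour), and prove by induction on the iteration count that (i) the chosen row really does have minimum weight in the current $H$; (ii) the pair $(C,V)$ maintained by the algorithm never leaves the rows and columns of $A_n$; (iii) every time Step~\ref{step: no new vars} is reached with $V_c=\emptyset$ before $A_n$ has been fully accumulated, $STRIP(M(C,V))=\emptyset$, so the algorithm does not output yet; and (iv) at the last iteration $STRIP(M(C,V))=A_n\neq\emptyset$, so Algorithm~\ref{alg:ess-finder} outputs $A_n$. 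Since Algorithm~\ref{alg: strip} only ever deletes a weight-one column together with the single row it meets, it leaves the kernel dimension of the matrix it acts on unchanged; consequently it does no harm as long as $M(C,V)$ is stripped to nothing, and invariant (iii) is really the requirement that no small sub-PESS sitting inside $A_n$ ``closes up'' and gets output before $A_n$ is complete. Guaranteeing this is the delicate part of the construction, and it is the step I expect to be the main obstacle.

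Items~\ref{subthm: M_n Nullity} and~\ref{subthm: A_n Nullity} are then linear algebra over $\mathbb{F}_2$. For Item~\ref{subthm: M_n Nullity} I would exhibit an explicit set of at least $n/2-2$ linearly independent rows of $M_n$ --- a basis of the row space of $A_n$ together with the binding rows --- which gives $\mathrm{rank}(M_n)\ge n/2-2$ and hence $dim(Ker(M_n))=n-\mathrm{rank}(M_n)\le n/2+2$. For Item~\ref{subthm: A_n Nullity} I would, in the other direction, write down $\Omega(n)$ linearly independent relations among the rows of $A_n$ (equivalently, an explicit subspace of $Ker(A_n)$ of dimension $n/2+\Omega(n)$), which bounds $\mathrm{rank}(A_n)$ and yields the claimed lower bound on $dim(Ker(A_n))$. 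Both are routine once the gadget is on the page, so I would simply carry out the bookkeeping on the explicit family; the only point requiring care is to confirm that adjoining the binding rows to $A_n$ genuinely raises the rank by the full $\Omega(n)$ --- i.e.\ that no binding row is redundant and no new dependency is created --- which again reduces to the explicit description of the gadget.
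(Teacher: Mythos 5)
Your outline for Items~\ref{subthm: valid choice} and~\ref{subthm: M_n Nullity} agrees with the paper's: Item~\ref{subthm: valid choice} is an induction over the iterations of \ref{alg:ess-finder} (Lemma~\ref{lem:valid-choices} tracks the leading-entry indices $j_t$ and shows that row $t$ is a lightest row at iteration $t$, the run closing on $A_n$ precisely when row $4N+2$ arrives with weight~$0$), and Item~\ref{subthm: M_n Nullity} comes from exhibiting $m-2$ independent rows (the paper deletes rows $1$ and $4N+3$ to expose a staircase form, so $Rank(M_n)\geq m-2$).

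The genuine gap is your description of $A_n$, which drives your plan for Item~\ref{subthm: A_n Nullity}. You posit that $A_n$ is a PESS ``whose rows carry $\Omega(n)$ independent linear dependencies'' and propose to prove Item~\ref{subthm: A_n Nullity} by writing those dependencies down. In the paper's construction $A_n$ is an ESS with \emph{full} row rank: the warm-up notes that $A_{18}$ ``has full rank so it is not a PESS,'' and the proof of Corollary~\ref{cor:decomp-fails} explicitly uses the fact that the first output is an ESS, not a PESS (a PESS first output is handled differently by \ref{alg:decomposition}). Item~\ref{subthm: A_n Nullity} is then a one-line count: $A_n$ has size $(4N+2)\times(10N+6)$, so $\dim Ker(A_n)=10N+6-Rank(A_n)\geq 6N+4 = n/2 + \Omega(n)$, purely from the surplus of columns over rows, with no row dependency involved. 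The point of the construction is that a legal greedy run of \ref{alg:ess-finder} can be steered into accumulating a submatrix touching $\Omega(n)$ more columns than rows before it halts, while the remaining rows of $M_n$ (those of $B_n$ and the Kronecker block), which constrain those same columns and keep $Rank(M_n)$ high, never enter the component. Notice moreover that your own plan for Item~\ref{subthm: M_n Nullity} (take a basis of $A_n$'s row space plus the binding rows to reach $n/2-2$ independent rows) already forces $A_n$ to have at most two row dependencies: since $Rank(M_n)\leq Rank(A_n)+(m-(4N+2))$, having $\Omega(n)$ dependencies in $A_n$ would give $Rank(M_n)\leq m-\Omega(n)$ and kill Item~\ref{subthm: M_n Nullity}. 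So your intended mechanisms for Items~\ref{subthm: M_n Nullity} and~\ref{subthm: A_n Nullity} are mutually incompatible; the paper sidesteps this tension by obtaining the large kernel of $A_n$ from a column surplus rather than from a rank drop.
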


Let us assume that the encoding scheme suggested by \cite{LuMoura2008} works. This implies that the message bits of the code are a union of the message bits of the components output by \ref{alg:decomposition}. That is, denoting by $k_i$ the number of new input bits that enter the $i$-th component,
\begin{equation}\label{eq:falseclaim}
    dim(Ker(M))=\sum_{i\in[r]}k_i
\end{equation}

As the following corollary shows, this is not always true.

\begin{corollary} \label{cor:decomp-fails}
    There exists a sequence of matrices $M_n\in\{0,1\}^{m\times n}$ and a valid set of choices for $\ref{alg:decomposition}(M_n)$ s.t. $\sum_{i\in [r]} k_i > dim( Ker (M_n))$.
\end{corollary}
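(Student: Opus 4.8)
The plan is to obtain Corollary~\ref{cor:decomp-fails} directly from Theorem~\ref{thm: matrix properties} by analysing only the first component produced by \ref{alg:decomposition}. Let $M_n$ and $A_n = M_n(C_1,V_1)$ be as in Theorem~\ref{thm: matrix properties}. By item~\ref{subthm: valid choice} there is a valid set of choices for which the first invocation of \ref{alg:ess-finder} inside $\ref{alg:decomposition}(M_n)$ returns $A_n$, so that $A_n$ is the first component; I fix these, and let \ref{alg:decomposition} continue with arbitrary valid choices, producing components $1,\dots,r$ with numbers of new input bits $k_1,\dots,k_r \ge 0$. Since $\sum_{i\in[r]} k_i \ge k_1$, it suffices to prove $k_1 \ge dim(Ker(A_n))$; together with items~\ref{subthm: M_n Nullity} and~\ref{subthm: A_n Nullity} this yields $\sum_i k_i \ge k_1 \ge dim(Ker(A_n)) \ge n/2 + \Omega(n) > n/2 + 2 \ge dim(Ker(M_n))$ for all $n$ above an absolute constant, and I let the sequence start there.

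Next I would bound $k_1$. Since $A_n$ is produced by \ref{alg:ess-finder} after a run of \ref{alg: strip}, all of its columns have weight $\ge 2$ and all of its constraints are supported inside $V_1$, so $A_n$ is an ESS or a PESS, and (being nonempty, as $dim(Ker(A_n))>0$) not a pseudo-tree; hence \ref{alg:decomposition} enters its (P)ESS branch. In the PESS case it picks a dependency $C'\subseteq C_1$ and deletes some $c\in C'$ in Step~\ref{step:removal}, so the first component is $B := M_n(C_1\setminus\{c\}, V_1)$, which \cite{LuMoura2008} treat (wrongly) as a pseudo-tree; since $c|_{V_1}$ is a sum of rows $c'|_{V_1}$ with $c'\in C_1\setminus\{c\}$, we have $\mathrm{rank}(B)=\mathrm{rank}(A_n)$ and $dim(Ker(B)) = dim(Ker(A_n))$. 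The pseudo-tree encoder assigns each output bit of the component's circuit a distinct enforced constraint, picked out by the peeling of weight-one variables that is \ref{alg: strip}; a standard argument (the owned constraints form a triangular submatrix) shows these enforced constraints are linearly independent, so the circuit has at most $\mathrm{rank}(B)=\mathrm{rank}(A_n)$ output bits. As this is the first component it has no predecessor, so all of its remaining $|V_1| - (\text{number of output bits})$ variables are new input bits, giving $k_1 \ge |V_1| - \mathrm{rank}(A_n) = dim(Ker(A_n))$. (The ESS case is the same: no constraint is deleted, $A_n$ is fed to the $1/2$-fold-ESS encoder, which likewise decides one variable per linearly independent enforced constraint, so again $k_1 \ge |V_1| - \mathrm{rank}(A_n) = dim(Ker(A_n))$.)

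The step that needs care -- and the main obstacle -- is making the previous paragraph rigorous, i.e.\ pinning down what the pseudo-tree encoder of \cite{LuMoura2008} actually does on the matrix $B$, which, precisely because of the flaw, is not a pseudo-tree: the peeling of weight-one variables gets stuck on a nonempty union of (P)ESSs, and one must argue that every variable left over at that point is counted as a free message bit, so that at least $|V_1| - \mathrm{rank}(A_n)$ variables become new input bits. If instead one reads the encoder as aborting outright on a non-pseudo-tree input, then \ref{alg:decomposition} produces no encoder at all, which is an even more blatant failure to encode; so we may assume a circuit is produced, and then $k_1 \ge dim(Ker(A_n))$, and hence the corollary, follows.
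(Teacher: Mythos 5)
Your plan is to bound the first component's contribution $k_1$ directly, but this misreads how \ref{alg:decomposition} processes the counterexample. After \ref{alg:ess-finder} returns the ESS $A_n$ in iteration $1$, the algorithm removes it from $M_n$ and calls \ref{alg:ess-finder} again on the residual; by construction this second call returns a \emph{PESS}. Step~2(a)(iii) then \emph{removes $A_n$ from the component list}, appends the constraint $c^* = \sum_{c\in C'} c|_{V_1}$ to $C_1$ to form a matrix $A'_n$ with one more row than $A_n$, and replaces $A_n$ by the recursive output $\ref{alg:decomposition}(A'_n)$. So the final output does not have $A_n$ as the first component, and your inequality $k_1 \ge dim(Ker(A_n))$ is not a statement about the algorithm's actual output; it refers to a component that gets discarded. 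The relevant quantity is the total contribution of $\ref{alg:decomposition}(A'_n)$, and the correct lower bound is $dim(Ker(A'_n)) \ge dim(Ker(A_n)) - 1$, which is where the paper's slack of ``$-1$'' in the final chain of inequalities comes from.

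There is a second, more structural difference. You try to prove $k_1 \ge dim(Ker(A_n))$ by pinning down what Lu--Moura's pseudo-tree/ESS encoder produces when fed a matrix that is \emph{not} actually a pseudo-tree or $1/2$-fold ESS -- and you yourself flag this as ``the main obstacle,'' since \cite{LuMoura2008} does not specify the encoder's behavior on such inputs. The paper deliberately avoids this: it argues by contradiction. Assuming $\ref{alg:decomposition}$ is correct, i.e.\ that \eqref{eq:falseclaim} holds for every input, gives $\sum_j k'_j = dim(Ker(A'_n))$ for the recursive call, and combining this with items~\ref{subthm: M_n Nullity} and~\ref{subthm: A_n Nullity} yields the impossible chain $n/2 + 2 \ge dim(Ker(M_n)) \ge dim(Ker(A'_n)) \ge dim(Ker(A_n)) - 1 \ge n/2 + \Omega(n)$. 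This never needs to reason about how a ``stuck'' peeling step allocates message bits.

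Two smaller issues: (i) your justification that $A_n$ ``is not a pseudo-tree, being nonempty, as $dim(Ker(A_n))>0$'' is wrong -- pseudo-trees can certainly have positive kernel dimension (otherwise encoding them would be pointless); the correct reason is that $A_n$ is itself an ESS and so contains one, violating Definition~\ref{def: pseudo-tree}. (ii) Your case split on $A_n$ being a PESS versus an ESS is moot here: $A_n$ has full row rank by item~2 of the ESS definition, so it is always an ESS, and in \ref{alg:decomposition} iteration $1$ the PESS branch is \emph{not} entered at all; the re-decomposition is triggered only at iteration $2$.
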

\begin{proof}[Proof of Corollary \ref{cor:decomp-fails}, assuming Theorem \ref{thm: matrix properties}]
    Assume towards contradiction that \ref{alg:decomposition} is correct on any input matrix $M$. 

    We assume that \ref{alg:decomposition} and \ref{alg:ess-finder} goes through the constraints in order.
    By inspection, we can see that the first component output is $A_n$ which is an ESS. After removing it from $M_n$, the algorithm continues to decompose the remainder graph. 
    In the next step, due to the structure of $M_n$, the algorithm will find a PESS, and therefore add a a constraint to $A_n$ resulting in $A'_n$, and run \ref{alg:decomposition} on $A'_n$. 

    The output of this step is a list of components  $M'_1,\ldots,M'_t$, such that, assuming that \ref{alg:decomposition} is correct and \eqref{eq:falseclaim} holds, $\sum_{j=1}^t k_j = \dim (Ker A'_n) \geq \dim(Ker(A_{n}))-1$. 

    However, by \autoref{subthm: M_n Nullity} of \autoref{thm: matrix properties}
    $n/2+2\geq dim(Ker(M_n))$, while \autoref{subthm: A_n Nullity} of \autoref{thm: matrix properties} assures that $dim(Ker(A_n))\geq n/2+\Omega(n)$. Combining the above and invoking \eqref{eq:falseclaim} once more, \[n/2+2 \geq dim(Ker(M_n))\geq dim(Ker(A'_n))\geq dim(Ker(A_n))-1 \geq n/2+\Omega(n),\] 
    and this leads to a contradiction to the correctness of \ref{alg:decomposition}.
    
\end{proof}

The proof of \autoref{thm: matrix properties} will follow from the description of the counterexample, along with some necessary properties. Our counter-example has column weight 3 and row weight 6. There are $n$ columns and $m$ rows ($m=n/2$), where $n=11N+7$ for any odd integer $N$. The counter-example $M_n\in \{0,1\}^{m\times n}$ is 
$$
M_n=
\left[
\begin{array}{c|c}
A_n&0\\
\hline
B_n&I_{\frac{N+1}{2}} \otimes \mathbf{1}_{3\times 2}\\
\end{array}
\right]
$$ 
where $\otimes$ stands for the Kronecker product, see Definition \ref{def:otimes}, and we now detail the matrices $A_n,B_n$.
\paragraph{The matrix \(A_n\).} We will now describe $A_n \in \{0,1\}^{(4N+2)\times(10N+6)}$. Towards this we shall decompose \(A_n = S_n + D_n\) (\(S\) stands for ``stairs'' and \(D\) for ``diagonals'') and describe these two components separately. \(S_n\) is the matrix described by blocks in Figure 
\ref{figure:S_n}, where:
\begin{enumerate}
    \item $\forall d\in[4]$, \(T_d = I_{N} \otimes \one_d\) where \(\one_{d} = (1,1...,1)\) \(d\)-times. 
    \item The first row of \(S_n\) is all zero except for six 1's at columns 1 to 6 \( (1,1,1,1,1,1,0,0,...)\).
\end{enumerate}

\paragraph{The matrix \(D_n\).} The matrix $D_n\in \{0,1\}^{(4N+2)\times(10N+6)}$ is described by blocks in Figure \ref{figure:D_n}. 
The first column has a single 1 at the second entry, then there's a
square block of height and width $4N+1$, with $1$'s on its main diagonal and on the lower sub-diagonal. Afterwards come three identity matrices of dimensions $3N+1, 2N+1$ and $N+1$. The last column has a single 1 at the last entry (which could be thought of as an identity matrix of dimension $1$). All other entries of $D_n$ are $0$.

\begin{figure}[htp]
    \centering
    \includegraphics[scale=0.43]{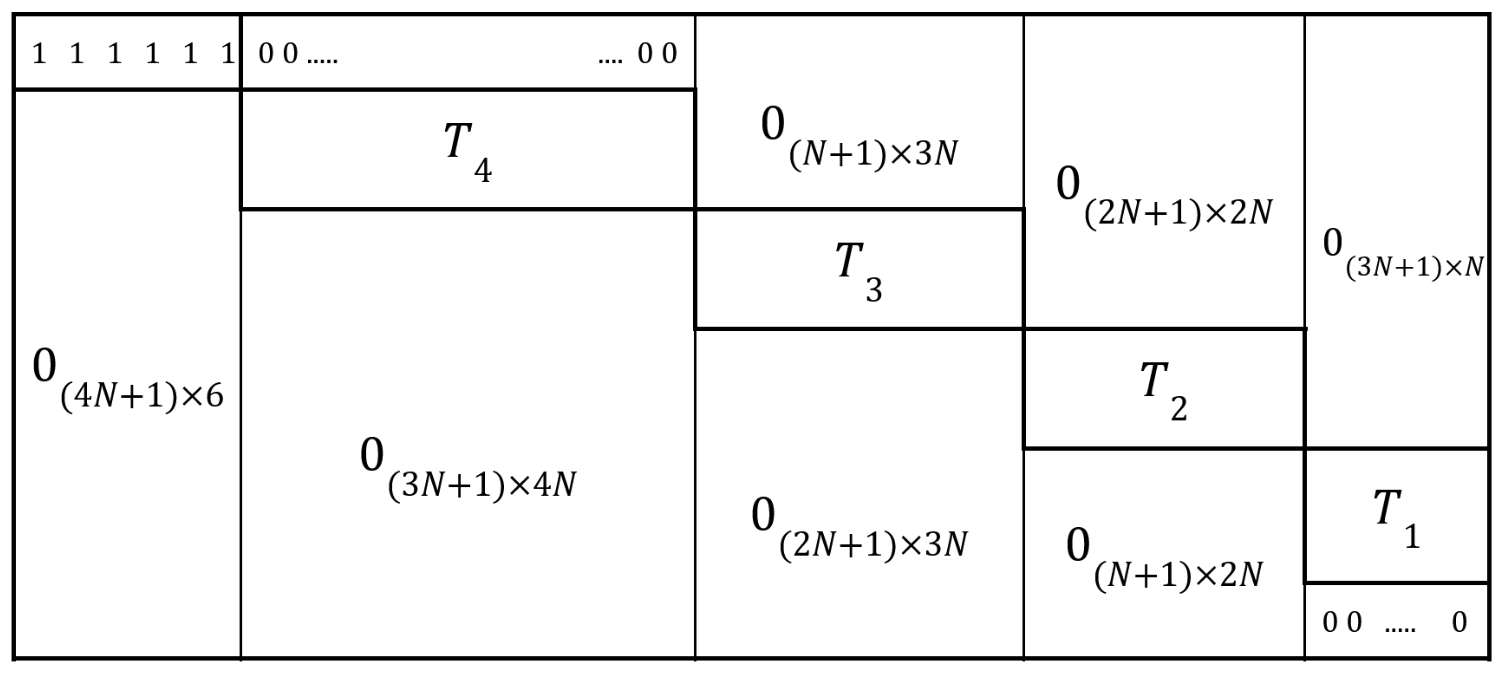}
    \caption{$S_n$. Recall that \(T_d = I_{N} \otimes \one_d\).} 
    \label{figure:S_n}
\end{figure}

\begin{figure}[htp]
    \centering
    \includegraphics[scale=0.45]{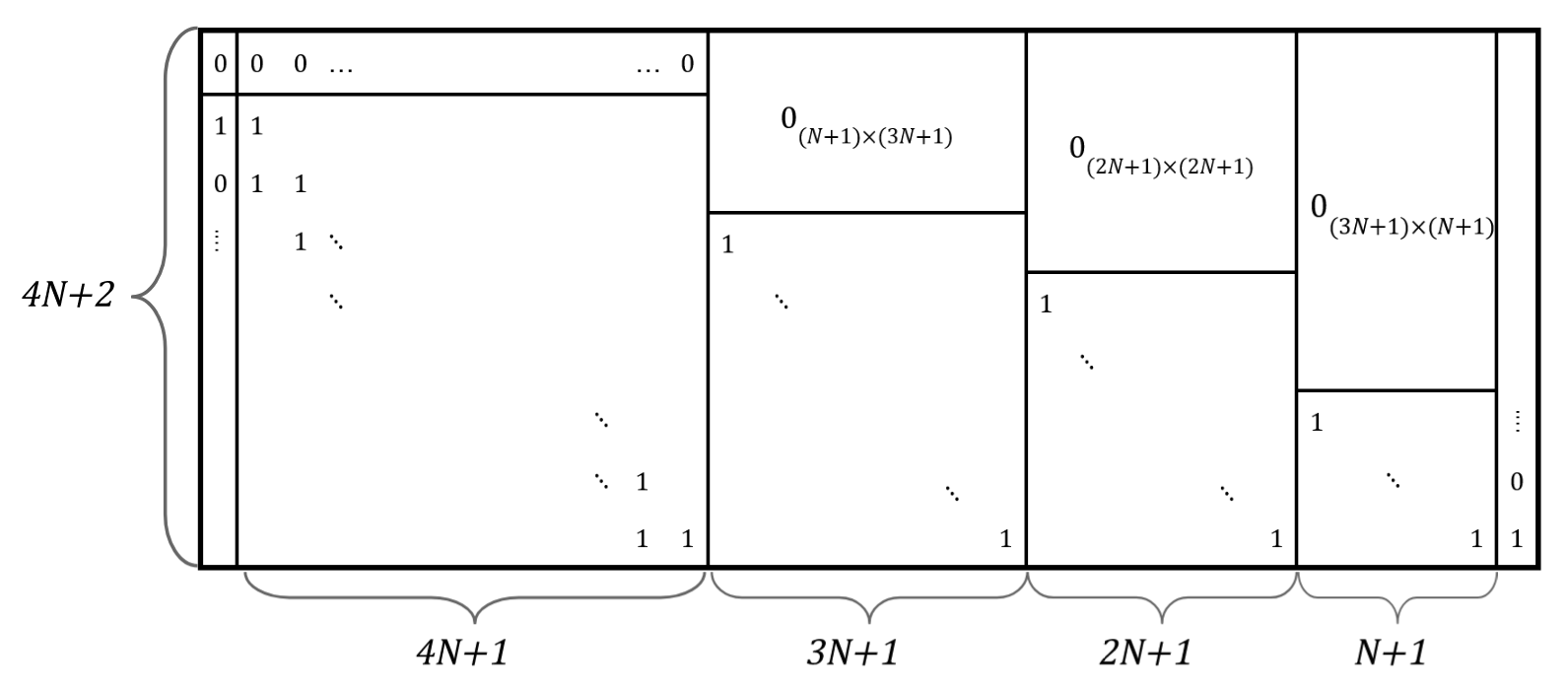}
    \caption{$D_n$. Recall that $A_n=S_n+D_n$.} 
    \label{figure:D_n}
\end{figure}

We show a formula for \(A_n\).

\begin{claim} \label{claim:a-n-formula}
\(A_n(i,j) = 1\) in the following cases:
    \begin{equation} \label{eq:a-n}
        \begin{cases}
             j\in[6] & i=1\\
             j\in\set{i-1,i} \cup \set{4i-1,...,4i+2} &i\in[2,N+1]\\
             j\in \set{i-1,i} \cup \set{i+3N+1} \cup \set{3i+N+1,...,3i+N+3}
             &i\in[N+2, 2N+1]\\
             j\in \set{i-1,i} \cup \set{i+3N+1} \cup \set{i+5N+2} \cup \set{2i+3N+3,2i+3N+4}
             &i\in[2N+2,3N+1]\\
             j\in \set{i-1,i} \cup \set{i+3N+1} \cup \set{i+5N+2} \cup \set{i+6N+3} \cup \set{i+6N+5}
             &i\in[3N+2,4N+1]\\
             j\in \set{i-1,i} \cup \set{i+3N+1} \cup\set{i+5N+2} \cup \set{i+6N+3} \cup\set{10N+6}
             &i=4N+2\\
        \end{cases}.
    \end{equation}
\end{claim}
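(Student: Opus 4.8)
The plan is to verify Claim~\ref{claim:a-n-formula} by a direct, case-by-case computation, matching the closed-form expression against the two matrices $S_n$ and $D_n$ whose block structure is given in Figures~\ref{figure:S_n} and~\ref{figure:D_n}. Since $A_n = S_n + D_n$ over $\mathbb{F}_2$, the natural approach is: (a) write down, row-index range by row-index range, which columns carry a $1$ in $S_n$; (b) do the same for $D_n$; (c) take the symmetric difference of the two supports and check it agrees with~\eqref{eq:a-n}. A key preliminary observation I would establish and then use repeatedly is that the supports of $S_n$ and $D_n$ are disjoint, so that ``sum mod $2$'' is just ``union'' and no cancellation occurs; this should be visible from the figures (the ``stairs'' of $S_n$ live in the block columns built from the $T_d$'s, while the ``diagonals'' of $D_n$ occupy the sub/main-diagonal band and the trailing identity blocks), and it makes the bookkeeping much cleaner.

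I would organize the computation along the same six row-regions that appear in the statement: the top row $i=1$; then $i\in[2,N+1]$; then $i\in[N+2,2N+1]$; then $i\in[2N+2,3N+1]$; then $i\in[3N+2,4N+1]$; and finally the bottom row $i=4N+2$. For the $D_n$ contribution: in each region, row $i$ picks up the pair $\{i-1,i\}$ from the main diagonal and lower sub-diagonal of the $(4N+1)\times(4N+1)$ band (with the degenerate cases $i=1$ giving just the single entry at column position shifted by the leading column, and $i=4N+2$ giving the trailing $\{10N+6\}$), together with one entry from each of the identity blocks $I_{3N+1}$, $I_{2N+1}$, $I_{N+1}$ that the row is tall enough to reach — this accounts for the terms $\{i+3N+1\}$, $\{i+5N+2\}$, $\{i+6N+3\}$ appearing and disappearing as $i$ grows, since a row only meets $I_{2N+1}$ once $i\ge N+2$, meets $I_{N+1}$ once $i\ge 2N+2$, and so on. I would compute the exact column offsets for these identity blocks from the block widths listed in the $D_n$ description ($1 + (4N+1) + (3N+1) + (2N+1) + (N+1) + 1 = 11N+6$... to be reconciled with $10N+6$, so part of the check is confirming the stated widths are consistent with an $(4N+2)\times(10N+6)$ matrix — likely some blocks overlap in columns or I am miscounting, and pinning this down is essential). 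For the $S_n$ contribution: row $i$ in region $d$ meets the block $T_d = I_N\otimes\one_d$, contributing a run of $d$ consecutive columns whose starting index is a linear function of $i$; matching this run to $\{4i-1,\dots,4i+2\}$ for $d$-value $4$ in $[2,N+1]$, to $\{3i+N+1,\dots,3i+N+3\}$ for the width-$3$ block, to $\{2i+3N+3,2i+3N+4\}$ for the width-$2$ block, and to the single column $\{i+6N+5\}$ for the width-$1$ block, fixes the horizontal offsets of the $T_d$ sub-blocks in Figure~\ref{figure:S_n}; and the special first row ``six $1$'s in columns $1$–$6$'' is exactly the $i=1$ case $j\in[6]$ (with $D_n$'s row~$1$ contributing nothing extra, or merging harmlessly).

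The main obstacle I anticipate is purely combinatorial: getting all the column offsets right simultaneously. The figures encode a fair amount of alignment data (where each $T_d$ block sits horizontally, where the sub-diagonal band starts and ends, where the trailing identities are glued on), and the formula~\eqref{eq:a-n} is only consistent if those alignments are chosen in one specific way; so the real content of the proof is to read off the offsets from the figures and verify the arithmetic $4i{+}2$ vs.\ $3i{+}N{+}3$ vs.\ $2i{+}3N{+}4$ etc.\ matches at the region boundaries (e.g.\ that the last column used by region $[2,N+1]$, namely $4(N{+}1){+}2 = 4N{+}6$, meshes with where region $[N+2,2N+1]$ begins, $3(N{+}2){+}N{+}1 = 4N{+}7$). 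Off-by-one errors at these seams, and at the degenerate top/bottom rows, are the likely pitfall. I would handle this by first computing the global column width as a running sum of block widths, using that to nail each offset, and then doing one clean induction-free pass over the six regions; once the disjoint-support lemma is in hand, each region check reduces to comparing two explicitly listed sets of integers, which is routine.
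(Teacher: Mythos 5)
Your proposal follows essentially the same route as the paper: derive a closed-form expression for $S_n$ region by region from the offsets of the $T_d$ blocks, describe $D_n$ as a union of diagonals with the appropriate column shifts, and combine the two supports (which are indeed disjoint, so addition over $\mathbb{F}_2$ is just union). One small correction to the arithmetic you flagged: the block widths of $D_n$ sum to $1 + (4N{+}1) + (3N{+}1) + (2N{+}1) + (N{+}1) + 1 = 10N{+}6$, not $11N{+}6$, so there is no inconsistency with the ambient dimension $(4N{+}2)\times(10N{+}6)$.
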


\begin{proof}[Proof of Claim \ref{claim:a-n-formula}]
Observe that \(T_d(i,j) = 1\) if \(i = \lceil \frac{j}{d} \rceil\), i.e. \(j \in \set{(i-1)d+1,...,id}\). The \(T_d\) matrix in \(S_n\) begins with a column offset of \(6 + N\sum_{k=d+1}^4 k\),
and a row offset of \(1+(4-d)N\). 
For example, $T_4$ is shifted by six columns and one row, and therefore, in the rows corresponding to $T_4$, $S_n(i,j)=1$ iff $j-6\in \{(i-1-1)4+1,...,(i-1)4\}$. By calculating the shifts on $T_3, T_2$ and $T_1$ and rearranging we get the algebraic definition of $S_n$:\\
\begin{equation}
\label{eq:sn-formula}
S_n(i,j)=1\iff
\begin{cases}
    j\in[6] & i=1\\
    j\in \set{4i-1,...,4i+2}    & i\in[2,N+1]\\
    j\in \set{3i+N+1,...,3i+N+3}    & i\in[N+2, 2N+1]\\
    j =  \set{2i+3N+3,2i+3N+4}      & i\in[2N+2,3N+1]\\
    j = i+6N+3                      & i\in[3N+2,4N+1]
\end{cases}.
\end{equation}

The matrix $D_n$ consists of six diagonals (considering the bottom right entry as a length 1 diagonal). 
\begin{itemize}
    \item The first diagonal consists of the entries $(i,j)$ satisfying $j=i-1$.
    \item The second diagonal has entries $(i,j)$ where $j=i$ and $i\geq 2$.
    \item The third diagonal starts at the $(N+2, 4N+3)$-entry and includes all indices $(N+2+t, 4N+3+t)$ (for $t\in\{0,...3N\}$). In other words, it contains all $(i,j)$ s.t. $j=4N+3+t=i+3N+1$ (and $i\in\{N+2,...,4N+2\}$).
    \item The other diagonals are similarly calculated from Figure \ref{figure:D_n}.
\end{itemize}
Putting together these six diagonals with the formula for $S_n$ yields the full formula for $A_n$ (Equation \ref{eq:a-n}).
\end{proof}

We denote by \(j_t\) the leading entry index of the \(t\)-th row in \(S_n\). One may verify the following formula:
\begin{equation} \label{eq:jt}
    j_t = d_t t + \frac{(5-d_t)(4-d_t)}{2}N + 7-2d_t
\end{equation}
for \(d_t=5-\lceil\frac{t-1}{N}\rceil\) (i.e. \(d_t\) is the step "width" of row $t$, and row $t$ belongs to the rows corresponding to \(T_{d_t}\) in \(S_n\)).

\paragraph{The matrix \(B_n\).} The matrix \(B_n\in\{0,1\}^{\frac{3}{2}(N+1)\times(10N+6)}\) is depicted in Figure \ref{figure:B_n}. Note that the first identity matrix in \(B_n\) is placed on the second row, while the rest of the identity matrices begin at the first row.

It is easy to observe that \(B_n(i,j)=1\) in the following cases:
\begin{equation} \label{eq:b-n}
    \begin{cases}
      j\in \set {1, \frac{1}{2}(11N+5), 7N+4, \frac{1}{2}(17N+11) } & i=1\\
      j\in \set{4N+i,\frac{1}{2}(11N+3)+i, 7N+3+i, \frac{1}{2}(17N+9)+i } & i\in[2,\frac{3}{2}(N+1)]
    \end{cases}.
\end{equation}

\begin{figure}[htp]
    \centering
    \includegraphics[scale=0.4]{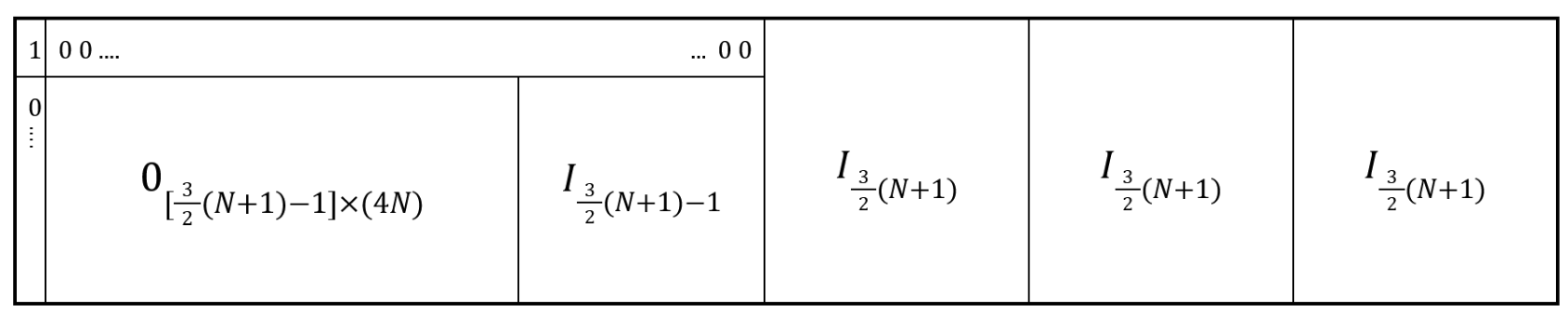}
    \caption{$B_n$}
    \label{figure:B_n}
\end{figure}

The bottom right block of $M_n$ was defined as $I_{\frac{N+1}{2}} \otimes \mathbf{1}_{3\times 2}$ but could actually be any $\frac{3}{2}(N+1)\times(N+1)$ matrix as long as it has row regularity 2 and column regularity 3.
$M_{18}$ is small enough to sketch (see Figure ~\ref{fig:M18}).\\

Before analyzing the counterexample we only need to confirm that all columns of $M_n$ have weight $3$, and that all rows have weight $6$ as claimed.

\begin{claim} \label{claim:regularity-of-m-n}
     Every row in $M_n$ has weight $6$, and every column weight $3$.
\end{claim}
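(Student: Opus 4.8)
The plan is to verify the two weight conditions directly from the explicit block structures of the constituent matrices $A_n$, $B_n$, and the bottom-right Kronecker block, using the closed-form descriptions in Equations \eqref{eq:a-n} and \eqref{eq:b-n}. I would organize the argument by the two claims — first row weights, then column weights — and within each, by the block partition of $M_n$ induced by its definition: the top block $[A_n \mid 0]$ with $4N+2$ rows, and the bottom block $[B_n \mid I_{(N+1)/2}\otimes\mathbf{1}_{3\times 2}]$ with $\frac32(N+1)$ rows.

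For the row-weight claim, I would first handle the rows of $A_n$ by counting the number of listed column indices in each of the six cases of \eqref{eq:a-n}. The $i=1$ row has exactly six $1$'s by construction. For $i\in[2,N+1]$ the set $\{i-1,i\}\cup\{4i-1,\dots,4i+2\}$ has $2+4=6$ elements (and I'd note these index sets are disjoint since $4i-1 > i$ for $i\ge 2$, i.e. $S_n$ and $D_n$ don't overlap). The cases $i\in[N+2,2N+1]$, $i\in[2N+2,3N+1]$, $i\in[3N+2,4N+1]$, and $i=4N+2$ each list $2+1+3=6$, $2+1+1+2=6$, $2+1+1+1+1=6$, and $2+1+1+1+1=6$ indices respectively; here I would check disjointness of the listed singletons/pairs using the index ranges (e.g. $i+3N+1$, $i+5N+2$, $i+6N+3$, $i+6N+5$, $10N+6$ all lie in separate column ranges for the relevant $i$). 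Since the columns $10N+7,\dots,n$ of $M_n$ restricted to the top $4N+2$ rows are zero, these are the full row weights of the top block. For the bottom block, each row of $B_n$ contributes exactly $4$ ones by \eqref{eq:b-n} (again noting the four indicated column indices fall in four disjoint ranges), and each row of $I_{(N+1)/2}\otimes\mathbf{1}_{3\times 2}$ contributes exactly $2$ ones, for a total of $6$.

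For the column-weight claim, I would partition the $n=11N+7$ columns according to which diagonals/identity-blocks of $S_n$, $D_n$, and $B_n$ they meet, and tally. Columns $1$–$6$ get one $1$ from the first row of $S_n$; then each column must pick up the remaining $1$'s from the two sub-diagonals of $D_n$, the staircase blocks $T_d$, the three identity shifts inside $D_n$, and the four identity blocks of $B_n$. The cleanest route is a global double-count sanity check — total number of ones equals $6m = 6\cdot\frac{n}{2} = 3n$, which forces average column weight exactly $3$ — combined with showing no column exceeds weight $3$ (equivalently, no column has weight $\le 2$). To bound column weights, I would go range by range: e.g. a column $j$ in the range hit by $T_4$ appears in exactly one staircase row, one or two $D_n$-diagonal positions, and possibly one $B_n$ identity, and I'd check the arithmetic of the offsets in \eqref{eq:a-n} and \eqref{eq:b-n} to confirm the count is exactly $3$ in every range. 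The bottom-right block $I_{(N+1)/2}\otimes\mathbf{1}_{3\times 2}$ trivially has column weight $3$ and those columns receive nothing from $A_n$ (the top-right block is zero), so they are done.

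The main obstacle is purely bookkeeping: the offsets in \eqref{eq:a-n} and \eqref{eq:b-n} are chosen so that, for each column, precisely the right collection of rows (across the several structural pieces) is incident to it, and verifying this requires carefully partitioning $[n]$ into the $O(1)$ many intervals on which the incidence pattern is uniform and checking each interval. There is no conceptual difficulty — the potential for a small off-by-one error in one of the interval boundaries or Kronecker offsets is the only real risk — so I would lean on the global count $3n = 6m$ as a consistency check against the interval-by-interval tally, and conclude that every row has weight exactly $6$ and every column weight exactly $3$.
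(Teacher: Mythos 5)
Your proposal is correct and follows essentially the same block-by-block bookkeeping as the paper: verify row weight $6$ by counting the column indices listed in each case of Equation~\eqref{eq:a-n} for the $A_n$ rows and adding $4+2$ for the $B_n$ rows, then verify column weight $3$ by tallying contributions from the staircase $S_n$, the diagonals of $D_n$, the identity blocks of $B_n$, and the Kronecker block over the $O(1)$ intervals of $[n]$. The one small addition you make that the paper does not is the double-counting sanity check ($6m = 3n$ fixes the average column weight at exactly $3$, so once the row weights are established, a one-sided bound on column weight would suffice); this is a nice guard against off-by-one errors but is not a different route.
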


\begin{proof}[Proof of Claim \ref{claim:regularity-of-m-n}]
    Clearly every row in $A_n$ has 6 ones (this can be verifies by looking at Equation \ref{eq:a-n}). Every row in $B_n$ has 4 ones (as indicated in Equation \ref{eq:b-n}), and the selection of the bottom right matrix is made to ensure that the weight of these rows is completed to 6.

    Let us verify that every \textbf{column} has \(3\) ones. The \(N+1\) rightmost columns (corresponding to the bottom left block) clearly have \(3\) ones, so we only need to show that the columns of 
$
\begin{bmatrix}
A_n\\
B_n
\end{bmatrix}
$
each have \(3\) ones. The diagonal $i$ starts at column 2 and continues to column $4N+2$, the diagonal $i+3N+1$ starts at columns $4N+3$ and ends a column before the diagonal $i+5N+2$ starts. Together with the diagonal $i+6N+3$ and entry $A_n(4N+2, 10N+6)$ the columns 2 to 10N+6 have another "layer" of $1$'s. The diagonal $i-1$ "covers" columns 1 to 4N+1, so in total, using Claim \ref{claim:a-n-formula}, columns $[2, 4N+1]$ have weight 3, while columns ${1}\cup[4N+2,10N+6]$ have weight 2. The weight of these columns is completed to 3 by the columns of $B_n$, which can be easily verified by its definition.
\end{proof}

We move on to prove \autoref{thm: matrix properties}.

\subsection{Proof of theorem 4.1} 

For the reader's convenience we restate the theorem.

\begin{T1}[Restated]
    \thmtext
\end{T1}

\begin{proof}~
\begin{enumerate}
\item For \autoref{subthm: valid choice} we use the following lemma (see  \hyperref[proof: valid-choices]{proof}).
    \begin{lemma} \label{lem:valid-choices}
        For \(t=1,2,...,4N+1\), choosing row \(t\) at iteration \(t\) is a valid choice for Algorithm \ref{alg:ess-finder}($M_n$).
    \end{lemma}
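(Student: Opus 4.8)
To say that ``choosing row $t$ at iteration $t$ is a valid choice'' means: if Algorithm~\ref{alg:ess-finder} has picked rows $1,\dots,t-1$ at its first $t-1$ iterations, then at iteration $t$ row $t$ is (tied for) a lightest row of the current matrix $H$ among the rows not yet in $C$. (If the already-chosen rows were allowed as candidates the claim would be false, as they have weight $0$ in $H$, so this reading is forced.) The first step is to get a formula for the weight of every row in $H$ at the start of iteration $t$. For a column $j$ let $o(j):=\min\{\,i:M_n(i,j)=1\,\}$ be its \emph{owner}, the smallest-indexed row containing it. Picking rows $1,\dots,t-1$ zeros exactly the columns $j$ with $o(j)\le t-1$, so the weight of a row $r$ at the start of iteration $t$ is $\mathrm{wt}_t(r):=|\{\,j\in\mathrm{supp}(r): o(j)\ge t\,\}|$; in particular $\mathrm{wt}_t(t)=\omega_t:=|\{\,j\in\mathrm{supp}(t): o(j)=t\,\}|$ is the number of columns ``owned'' by row $t$. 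The lemma thus becomes the single inequality
\begin{equation*}
\omega_t\ \le\ \mathrm{wt}_t(r)\qquad\text{for all }t\le 4N+1\text{ and all rows }r\notin\{1,\dots,t\}.
\end{equation*}

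The plan is to read $o$ and $\omega_t$ directly from the explicit formulas~\eqref{eq:a-n}, \eqref{eq:b-n}, \eqref{eq:jt}. The columns of $A_n$ split into the staircase columns -- the blocks $T_4,T_3,T_2,T_1$, which occupy disjoint column intervals that get ``swept left to right'' precisely as the rows of the four zones $[2,N+1]$, $[N+2,2N+1]$, $[2N+2,3N+1]$, $[3N+2,4N+1]$ are swept top to bottom -- and the descending diagonals of $D_n$ (the bidiagonal block and the identity blocks of sizes $3N+1$, $2N+1$, $N+1$). One checks that row $t\le 4N+1$ owns exactly the staircase columns of its own block together with a bounded number of diagonal columns, so $\omega_t$ is a small constant on each zone ($\omega_1=6$, $\omega_t=4$ on the $T_4$ zone, and $\omega_t\le 3$ afterwards), the only exceptions being the $O(1)$ rows adjacent to zone boundaries. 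For the other side I would verify $\mathrm{wt}_t(r)\ge\omega_t$ case by case on the type of $r$: (a) a later row of $A_n$ keeps unzeroed its $\omega_r$ own columns together with enough of its descending-diagonal columns -- whose owners are earlier rows that still sit at or beyond $t$ -- to reach $\omega_t$; (b) a bottom row of $M_n$ keeps its two columns in the $I_{(N+1)/2}\otimes\mathbf{1}_{3\times 2}$ block (owner $>4N+1$), and since the column indices of $B_n$ in~\eqref{eq:b-n} are spread far apart (around $4N$, $\tfrac{11N}{2}$, $7N$, $\tfrac{17N}{2}$), the largest of them is still unzeroed while $t$ is below roughly $2N$, keeping $\mathrm{wt}_t(r)\ge 3\ge\omega_t$ there and $\ge 2\ge\omega_t$ later; (c) rows $1,\dots,t-1$ are not candidates.

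The hard part is entirely the bookkeeping in the previous paragraph, and it concentrates in two spots. First, pinning down $\omega_t$ at the zone boundaries: there a descending-diagonal column of row $t$ can coincide with a staircase column of a much earlier row, or -- because the two bottom-most blocks of $A_n$ realize the shifts $i\mapsto i+6N+3$ and $i\mapsto i+6N+5$, which interleave -- with a column of the row two places above, so that $\omega_t$ dips or rises by one and one has to track exactly which columns of row $t$ are genuinely fresh. Second, checking $\mathrm{wt}_t(r)\ge\omega_t$ for the bottom rows in the middle range of $t$, where some but not all of a bottom row's four $B_n$-columns have already been zeroed. Several of these inequalities are tight (they become equalities for $t$ near the zone endpoints), so the estimates cannot be slack; but none of them is conceptually deep -- each is a finite check against~\eqref{eq:a-n}--\eqref{eq:jt}.
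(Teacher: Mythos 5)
Your setup is sound and is the same argument as the paper's, just phrased with an ``owner'' function instead of the paper's leading-index $j_t$: after picking rows $1,\dots,t-1$, a column is zeroed iff its least-indexed incident row is $<t$, and the lemma reduces to $\omega_t\le\mathrm{wt}_t(r)$ for every remaining row $r$. The paper's extra simplifying observation---that the set of zeroed columns at the start of iteration $t$ is precisely the \emph{prefix} $\{1,\dots,j_t-1\}$---is what turns that inequality into a tractable comparison $wt(t,j_t)\le wt(i,j_t)$ that can be read off the staircase picture; your formulation is equivalent, and you do gesture at the ``swept left to right'' structure, but not exploiting the prefix form head-on is part of why your bookkeeping feels heavier than it needs to.

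The genuine gap is that you stop at the point where the actual proof begins. You correctly compute $\omega_t$ on each zone and correctly identify the two delicate spots (zone boundaries, and the $B_n$ rows once some of their four $B_n$-columns have been zeroed), but then write ``I would verify \dots case by case'' and ``each is a finite check'' without carrying any of it out. The paper does do this work and it is not negligible: it shows separately that row $t$ is no heavier than later $A_n$ rows (splitting on whether $t$ is the last row of its block $T_d$), that row $4N{+}3$ is the lightest $B_n$ row for every $t$ by tracking where the four step-functions $wt(i,\cdot)$ decrease, and then compares $wt(t,j_t)$ to $wt(4N{+}3,j_t)$ in four ranges of $t$ with explicit column-interval arithmetic. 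Several of those comparisons are tight, exactly as you warn, so the inequalities cannot be waved through. As written, your submission is a correct and well-organized plan for the paper's proof, not a proof; to close the gap you need to actually execute the case analysis you describe in your second and third paragraphs.
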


    Assuming \ref{alg:ess-finder} makes these choices, then after iteration $4N+1$ all the columns with 1's in those rows are zeroed out, yielding the matrix

    $$
\Tilde{M}_n=
\left[
\begin{array}{c|c}
0_{(4N+2)\times(10N+6)}&0_{(4N+2)\times(N+1)}\\
\hline\\
0_{\frac{3}{2}(N+1)\times(10N+6)}&I_{\frac{N+1}{2}} \otimes \mathbf{1}_{3\times 2}\\
\end{array}
\right]
$$
    
    So at iteration $4N+2$, row $4N+2$ (the last row of $A_n$) will certainly be chosen since it would have weight 0, while all rows $>4N+2$ (rows corresponding to $B_n$) will have weight 2. According to step \ref{step: no new vars} of Algorithm \ref{alg:ess-finder} this leads to the call $STRIP(A_n)$ which returns $A_n$ since all columns of $A_n$ have weight of at least 2. Hence, Algorithm \ref{alg:ess-finder} halts and outputs $A_n$.
    \item The removal of rows $1$ and $4N+3$ from $M_n$ (corresponding to the first row of $A_n=D_n+S_n$ and the first row of $B_n$) yields an ($m-2$)-row matrix that has a ``staircase form'', so  $Rank(M_n)\geq m-2$. This can be seen by looking at $D_n$  (Figure \ref{figure:D_n}) and at $B_n$ (Figure \ref{figure:B_n}). The first diagonal in $D_n$ continues into $B_n$'s first identity matrix. 
    We conclude that $dim(Ker(M_n))=n-Rank(M_n)\leq n-(m-2)=n/2+2$.
    \item $A_n$ has $10N+6$ columns and $4N+2$ rows, therefore $$dim(Ker(A_n))=10N+6-Rank(A_n)\geq6N+4\overset{(n=11N+7)}{=}n/2+\Omega(n) \gg n/2$$
\end{enumerate}
\end{proof}

\begin{proof}[Proof of Lemma \ref{lem:valid-choices}]\label{proof: valid-choices}
Let us introduce the following notation.
For any $i\in [m], j\in [n]$,
$$
wt(i,j):=\sum_{k=j}^{n}M_n(i,k) 
$$
This is the Hamming weight of \textbf{row} \(i\) restricted to entries \(j,\ldots ,n\).

Recall that \(j_t\) is the leading entry index of the \(t\)-th row in \(S_n\). For example, in Figure \ref{figure:j_t} we highlight the entry $S_n(3, j_3)$.

\begin{figure}[htp]
    \centering
    \includegraphics[scale=0.43]{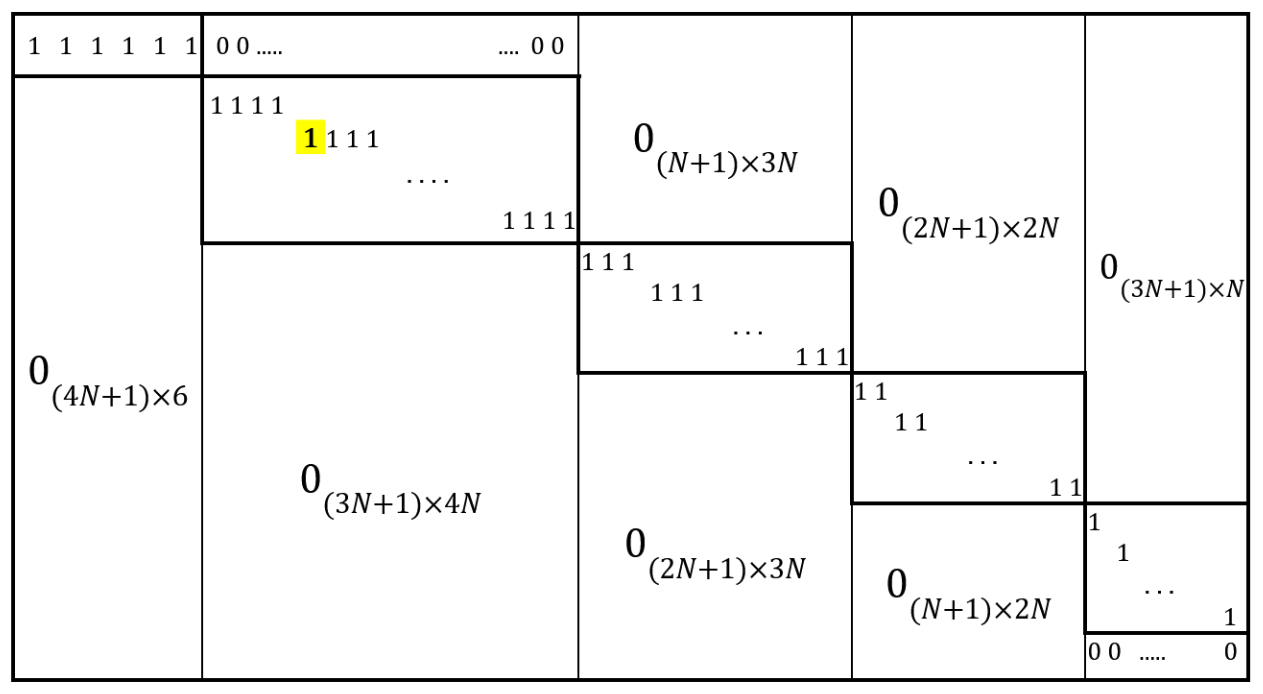}
    \caption{$S_n(3, j_3)$ highlighted}
    \label{figure:j_t}
\end{figure}

Assume by induction that rows $1,.., t-1$ were chosen in iterations $1,.., t-1$ of Algorithm \ref{alg:ess-finder}($M_n$) for some $t\in \{2,..., 4N+1\}$. Then columns $\set{1,...,j_t-1}$ were zeroed-out in those iterations since for every \(t\), the submatrix \(M_n(\set{1,...,t-1},\set{1,...,j_t-1})\) has at least one non-zero entry at every column. This can be seen by recalling that the top left block of $M_n$ is $A_n=D_n+S_n$ and by noting that every column in \(S_n(\set{1,...,t-1},\set{1,...,j_t-1})\) has at least one non-zero entry (for illustration look at Figure \ref{figure:j_t}). Moreover, the submatrix \(M_n(\set{1,...,t-1},\set{j_t,...,n})\) is the zero matrix, so no other columns were zeroed out in previous steps. To verify this claim, observe that by Claim \ref{claim:a-n-formula}, the last non-zero entry for every row $i\in \{1,...,4N+1\}$ comes from $S_n$, and that $S_n(\{1,...,t-1\},\{j_t,...,n\})$ is the zero matrix (see again Figure \ref{figure:j_t} for illustration).

To complete the proof we need to show that for all \(i>t\),
\begin{equation} \label{eq:lightest-row}
    wt(t,j_t)\leq wt(i, j_t),
\end{equation}
meaning that row $t$ is a valid choice for the algorithm at iteration $t\in [4N+1]$. 

If $t=4N+2$ then $wt(t,j_t)=0$ so \eqref{eq:lightest-row} clearly holds, so assume that $t<4N+2$. We prove \eqref{eq:lightest-row} by analyzing separately the rows $i \leq 4N+2$ (corresponding to $A_n$) and the rows $i > 4N+2$ (corresponding to $B_n$).

We begin with \(i \leq 4N+2\). Recall that $d_t$ is the number of ones in a row in a the block $T_d$ in $D_n$, such that $t$ is in that block. If $t$ is the last row of a block \(T_d\), i.e. \(t \in \set{N+1,2N+1,3N+1,4N+1}\), then \(wt(t,j_t)=d_t=wt(i, j_t)\) for every row $i\in (t, 4N+2]$. Let us demonstrate this by example. The row \(t=N+1\) has non-zeros only coming from $S_n$ (and not $D_n$). These are four non-zeros coming from $T_4$ so \(wt(t,j_t)=4\). Rows \(i\in [N+2,2N+1]\) have ones from both $S_n$ and $D_n$. In $S_n$, they get three $1$'s  from $T_3$. An additional \(1\) comes from the second identity component in \(D_n\). The remaining cases are similar and easy to verify.

For \(t \notin \set{N+1,2N+1,3N+1,4N+1}\), let \(t'\) denote the last row in the block of \(t\), namely, \(d_{t'}=d_t\) and for every \(i>t'\), \(d_i < d_t\). For all \(i\in(t, t']\), \(wt(i,j_t)\geq d_t=wt(t,j_t)\). For all $i\in(t', 4N+2]$, \(wt(i,j_t) \geq d_t\) since \(wt(i,j_t)\geq wt(i,j_{t'})=d_t\), thus proving the first item.

Thus far we showed that at iterations $t\in [4N+1]$, row $t$ was not heavier than any of the rows $i \in (t, 4N+2]$ (the rows below $t$ in $A_n$).
Now we prove that the first row in \(B_n\) is the lightest among all rows of \(B_n\), that is, $wt(4N+3,j_t)\leq wt(i, j_t)$ for any \(i \geq 4N+3\).
we denote by \(i'=i-4N-2\) the index of row \(i\) relative to \(B_n\). 
The value $wt(4N+3,j)$ decreases at columns $j\in\set{1, \frac{1}{2}(11N+3)+1, 7N+4, \frac{1}{2}(17N+9)+1}$ while the value $wt(i, j)$, decreases at $j\in\set{4N+i', \frac{1}{2}(11N+3)+i', 7N+3+i', \frac{1}{2}(17N+9)+i'}$ for all \(i>4N+3\). Thus concluding that \(wt(4N+3,j_t) \leq wt(i,j_t)\) for all \(i>4N+3\) and all \(t\).

At last we show that at iterations $t\in[4N+1]$, row $t$ of $A_n$ is no heavier than the first row of $B_n$
(i.e. \(wt(t,j_t)\leq wt(4N+3,j_t)\)). This is also a case analysis:
\begin{itemize}
    \item At row $t=1$,  $wt(1,1)=wt(4N+3,1)=6$.
    \item For $t\in[2, N+1]$, we prove that $wt(4N+3,j_t)\geq4=wt(t,j_t)$:
    
    The rows \(t\in[2, N+1]\) correspond to the block $T_4$ in $S_n$, therefore \(wt(t,j_t)=4\). By the definition of $S_n$ (Equation \ref{eq:sn-formula}), when \(t\in[2, N+1]\), then \(j_t=4t-1\), i.e. \(j_t\in[7, 4N+3]\). By the definition of $B_n$ (Equation \ref{eq:b-n}), the first row has weight of at least 2 for all \(j\leq 7N+4\). Adding weight 2 from the block \(I_{\frac{N+1}{2}}\otimes \mathbf{1}_{3\times 2}\) we get that \(wt(4N+3,j)\geq 4\) for all \(j \in [7, 4N+3]\).
    
    \item Similarly we can show that in rows $t\in[N+2, 2N+1]$, $wt(4N+3, j_t)\geq3=wt(t,j_t)$. These rows correspond to the block $T_3$ in $S_n$, therefore $wt(t,j_t)=3$ and $j_t\in [4N+7,7N+4]$, while for all $j\leq\frac{1}{2}(17N+11)$, $wt(4N+3,j)\geq3$.
    \item In rows $t\in[2N+2, 4N+1]$, $wt(4N+3, j_t)\geq2\geq wt(t, j_t)$. For all columns corresponding to $A_n$ ($j\in[10N+6]$), $wt(4N+3, j)\geq2$ since the first row of the block $I_{\frac{N+1}{2}}\otimes \mathbf{1}_{3\times 2}$ contributes 2 to the weight of row $4N+3$. Rows $t\in[2N+2, 4N+1]$ correspond to the blocks $T_2$ and $T_1$ in $S_n$ and therefore $wt(t, j_t)\in\{1,2\}$. 
\end{itemize}
\end{proof}

\section{The typical case}
We would like to emphasize that although our example may seem like a carefully constructed counterexample, it seems that a random low density matrix will fail as well. 
 In \cite{RU} the authors analyze the expected behavior of similar algorithms on random matrices, and conclude that none of them yield sub-quadratic encoding complexity. The nature of their analysis is heuristic and therefore cannot hold as a formal proof. Nevertheless, their results are backed with experimentation, so they are likely to have a holding in reality. Our experiments show that a random matrix will have a first component with too many message bits, with very high probability\footnote{The probability depends on the row and column weight distributions. The observation applies for example to (3,6)-regular matrices that are large enough (say $n>200$).}.

\section[Running Algorithm \ref{alg:decomposition} on M18]{Running Algorithm \ref{alg:decomposition} on \(M_{18}\)} \label{sec:running-the-alg}
As a warm up, let us run Algorithm \ref{alg:decomposition} on \(M_{18}\) depicted in Figure \ref{fig:M18}. There are many choices made by this algorithm, so we will show a sequence of choices that result in an output of components that do not describe the code \(Ker(M_{18})\).

\begin{enumerate}
    \item Let \(M=M_{18}\). The first time the Algorithm \ref{alg:decomposition} is called as a subroutine, it is called on \(M\) and returns \(A_{18}=M([6],[16])\), the first six rows in \(M\):
    \begin{enumerate}
        \item This is because for iterations \(i=1,2,\dots,6\) of Algorithm \ref{alg:decomposition}, the \(i\)-th row of \(M\) is a lightest row.
        \item After selecting these rows, in the sixth iteration \(V_c = \emptyset\) so after running step \(2(b)\), the STRIP procedure returns \(A_{18}\) which Algorithm \ref{alg:decomposition} outputs.
    \end{enumerate}
    This matrix has full rank so it is not a PESS.
    \item The second time Algorithm \ref{alg:decomposition} is called as a subroutine, it is called on \(M(\set{7,8,9},\set{17,18})\) (the grey part of Figure \ref{fig:M18}). One observes that \(M(\set{7,9},\set{17,18})\) is a valid output in this step.
    \item \(M(\set{7,9},\set{17,18})\) is a PESS, so we remove one row (say, the \(7\)-th row) and recursively run Algorithm \ref{alg:decomposition} on the matrix whose rows are those of \(A_{18}\) plus the row \(c_7 + c_9\) (the sum of the seventh and ninth rows of \(M_{18}\)).
    Assuming \ref{alg:decomposition} works properly, the components it returns when called on $A_{18}$ with the new row, will have at least 9 input bits, since there are 16 variables and 7 constraints. However, the residual matrix $M(\{8,9\},\{17,18\})$ has rank 1, so it also has an input bit. In total, the output components will have at least 10 input bits, while $M_{18}$ has only 9 (it has full rank). We conclude that the output components of $\ref{alg:decomposition}(M_{18})$ do not describe the code $Ker(M_{18})$.
\end{enumerate}

\bibliographystyle{plain}
\bibliography{bibliography.bib}
\end{document}